\documentclass{article}
\usepackage[utf8]{inputenc}

\usepackage{amsmath, amsfonts, amsthm, amssymb, wasysym, txfonts, bm, bbm,graphicx, nicefrac}
\usepackage{epstopdf,multicol} 
\usepackage[font=small,labelfont=bf]{caption}
\usepackage{pdfpages, float}

\newtheorem{theorem}{Theorem}[section]
\newtheorem{lem}[theorem]{Lemma}

\theoremstyle{definition}
\newtheorem{defn}{Definition}[section]
\newtheorem{rmk}{Remark}[section]

\makeatletter
\def\blfootnote{\xdef\@thefnmark{}\@footnotetext}
\makeatother

\begin{document}

\title{Explicit optimal-length \\ locally repairable codes of distance 5\blfootnote{This work was supported by the Early Career Research Workshop in Coding Theory, Cryptography, and Number Theory {held at Clemson University in 2018,} under NSF grant DMS-1547399.}}
\author{Allison Beemer\footnote{School of Electrical, Computer and Energy Engineering, Arizona State University,
Tempe, AZ. The author was sponsored by the U.S. Army Research Laboratory under Agreement Number W911NF-17-S-0003.
The views and conclusions contained in this document are those of the authors and should not be interpreted as representing the official policies, either expressed or implied, of the U.S. Army Research Laboratory or the U.S. Government. The U.S. Government is authorized to reproduce and distribute reprints for Government purposes notwithstanding any copyright notation hereon.},
Ryan Coatney\footnote{Department of Mathematics, University of Arizona,
Tucson, AZ.},
Venkatesan Guruswami\footnote{Computer Science Department, Carnegie Mellon University,
Pittsburgh, PA. Research of the author was supported in part by NSF grant CCF-1563742.}, \\
Hiram H. L\'opez\footnote{School of Mathematical and Statistical Sciences, 
Clemson University,
Clemson, SC. The author was partially supported by CONACyT, CVU No. 268999 project ``Network Codes" and Universidad Aut\'onoma de Aguascalientes.}, 
and Fernando Pi\~{n}ero\footnote{Department of Mathematics, University of Puerto Rico at Ponce, Ponce, PR.}}
\date{
}

\maketitle

\begin{abstract} 
Locally repairable codes (LRCs) have received significant recent attention as a method of designing data storage systems robust to server failure. Optimal LRCs offer the ideal trade-off between minimum distance and locality, a measure of the cost of repairing a single codeword symbol. For optimal LRCs with minimum distance greater than or equal to 5, block length is bounded by a polynomial function of alphabet size. In this paper, we give an explicit construction of optimal-length (in terms of alphabet size), optimal LRCs with minimum distance equal to 5.
\end{abstract}

\noindent
{\bf Erratum}\newline
\noindent
In an earlier version we presented a construction of explicit optimal-length locally repairable codes of distance $5$ using cyclic codes, which however was incorrect and only had distance $4$. For that reason the cyclic construction is omitted on this version.

\section{Introduction}

The regular generation of vast amounts of data, and the desire to store this data reliably, serve as the impetus for the design of robust distributed storage systems (DSS). \emph{Locally repairable codes (LRCs)} are a class of codes designed to correct symbol erasures by contacting a small number of other codeword symbols, and have recently attracted a great amount of interest.

We say that an $[n,k,d]_{q}$ linear code is \emph{locally repairable with locality $r$} if each codeword symbol is a function of at most $r$ other symbols. While small locality is desirable, there is a trade-off between locality and the minimum distance of the code, which we simultaneously seek to keep large in the event of many erasures. It was shown in \cite{GHSY12} that an $[n,k,d]_{q}$ code with locality $r$ obeys a Singleton-like bound given by

\[ d\leq n-k+1 - \left(\left\lceil \frac{k}{r} \right\rceil-1\right) = n-k-\left\lceil \frac{k}{r} \right\rceil +2.\]

\noindent We call an LRC which meets this bound \emph{optimal}. 
The Singleton-like bound naturally calls to mind Maximum Distance Separable (MDS) codes, which meet the Singleton bound of $d \leq n-k+1$. In particular, an optimal LRC with $r=k$ is an MDS code.

The MDS conjecture states that there are no non-trivial MDS codes with block length larger than $q+2$, where $q$ is the alphabet size of the code, and that in most cases the upper bound is $q+1$; the case in which $q$ is prime was shown by Ball in \cite{B12}. It is thus natural to speculate as to the relationship between alphabet size and block length for LRCs. Early optimal LRC constructions required alphabet size exponential in block length \cite{HCL07,SRKV13}. In \cite{TB14}, Tamo and Barg used subcodes of Reed-Solomon codes to construct optimal LRCs over alphabet size linear in block length; several other constructions also gave block length $O(q)$ for LRCs with alphabet size $q$ \cite{ZXL17,JMX17,LMX18}. 

Barg et al. then presented constructions of length $\Omega(q^{2})$ in some cases of small distances~\cite{BHHMV17}, distancing the behavior of LRCs from that of MDS codes in this regard. This work was followed closely by the results of \cite{LXY18} which demonstrated optimal LRCs of \emph{unbounded} length for the cases $d=3,4$. Recently, however, Guruswami, Xing, and Yuan~\cite{GXY18} showed that for minimum distance at least $5$, the length of an optimal LRC is in fact bounded by a function of the alphabet size; they also gave simpler constructions of unbounded length for $d=3,4$.

In this paper, we give an explicit construction of optimal LRCs with minimum distance $d=5$ that have largest possible asymptotic length as a function of the alphabet size $q$. In this case, the authors of \cite{GXY18} show that the block length is at most $O(q^{2})$, and also showed a greedy construction to achieve it. {Concurrent work by Jin gives optimal LRCs constructions of length $O(q^{2})$ for minimum distances 5 and 6 via binary constant weight codes \cite{J18}.}
The construction of this paper relies in the Cartesian codes.

Loosely speaking, Cartesian codes are obtained when polynomials with $m$ variables up to a certain total degree are evaluated on a Cartesian set on $m$ components. Each of the $m$ components is a subset of the finite field $\mathbb{F}_q.$ When $m$ is $1,$ Cartesian codes become Reed-Solomon codes. In this paper we will focus on the case when $m$ is $2$ and each component is the multiplicative group $\mathbb{F}_q^{*}=\mathbb{F}_q\setminus\{0\}.$ Cartesian codes were introduced, independently, in \cite{GT13} and \cite{LRV14}. Many properties and applications of Cartesian codes have been studied since their introduction: for example, \cite{CN17} and \cite{BD18} investigate Hamming weights and generalized Hamming weights, respectively, and in \cite{LMM18}, the authors examine the property of being linear complementary dual. 

The paper is organized as follows: in Section \ref{prelims}, we give necessary background and notation. We present optimal LRC constructions using Cartesian codes in Section \ref{Cartesian}. Section \ref{conclusion} concludes the paper.

\section{Preliminaries}
\label{prelims}

We first give several definitions and results that will apply to both code constructions. We begin by formally defining locally repairable codes. Throughout the paper, we will focus on $[n,k,d]_{q}$ linear codes: $k$-dimensional subspaces of $\mathbb{F}_{q}^{n}$ with minimum Hamming distance $d$.
Let $[n]:=\{1,2,\ldots,n\}$.

\begin{defn}
Let $C$ be a $q$-ary block code of length $n$. For each $\alpha\in \mathbb{F}_{q}$ and $i\in [n]$, define

\[C(i,\alpha):= \{ \mathbf{c}=(c_{1},\ldots,c_{n})\in C : c_{i}=\alpha\}.\] 

For a subset $I\subseteq [n]\setminus \{i\}$, we denote by $C_{I}(i,\alpha)$ the projection of $C(i,\alpha)$ onto $I$. For $i\in [n]$, a subset $R$ of $[n]$ that contains $i$ is called a \textit{recovery set} for $i$ if $C_{I_{i}}(i,\alpha)$ and $C_{I_{i}}(i,\beta)$ are disjoint for any $\alpha\neq \beta$, where $I_{i}=R\setminus \{i\}$. Furthermore, $C$ is called a \textit{locally repairable code (LRC) with locality $r$} if, for every $i\in [n]$, there exists a recovery set for $i$ of size $r+1$.
\end{defn}

An \emph{optimal} $[n,k,d]_{q}$ LRC with locality $r$ is an LRC for which equality is met in the Singleton-like bound given by 

\[ d \leq n-k-\left\lceil \frac{k}{r}\right\rceil +2.\]

By Lemma 2.2 and Remark 2 of \cite{GXY18}, achieving equality above is equivalent to the following if $d-2 \not\equiv  r \mod r+1$:

\begin{equation}
\label{optimal_alt}
n-k=\frac{n}{r+1}+d-2-\left\lfloor\frac{d-2}{r+1}\right\rfloor.
\end{equation}

In \cite{GXY18}, the authors show the following for the case $d=5$.

\begin{theorem}[\cite{GXY18}]
\label{upper}
Let $C$ be an optimal $[n, k, d]_q$ locally repairable code of locality $r$, with
$(r+ 1)\mid n$ and parameters satisfying 

\begin{equation}
\label{long_enough}
\frac{n}{r+1}\geq \left( 3-\left\lfloor\frac{3}{r+1}\right\rfloor\right)(3r+2)+\left\lfloor\frac{3}{r+1}\right\rfloor+1.
\end{equation}

\noindent Then, $n=O(q^{2})$.
\end{theorem}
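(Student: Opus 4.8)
The plan is to bound the block length $n$ of an optimal distance-$5$ LRC by relating it to a combinatorial object whose size is controlled by $q$. Since $(r+1)\mid n$, partition $[n]$ into $n/(r+1)$ recovery sets (local groups), each of size $r+1$. On each local group the code restricts to a local code, and optimality forces these local codes to be MDS with certain parameters (this is the standard structure theorem for optimal LRCs, which one extracts from Equation~\eqref{optimal_alt}); in the $d=5$ case the relevant ``global parity'' redundancy beyond the local parities is small (essentially $d-2-\lfloor (d-2)/(r+1)\rfloor$ extra check symbols). So after puncturing each local group down to an information set, one is left with a code of length $N=n/(r+1)\cdot r$ or so, dimension $k$, and a handful of global parity checks. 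The hypothesis \eqref{long_enough} is exactly what guarantees $n/(r+1)$ is large enough that the counting argument below is non-vacuous.

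The core of the argument is to show that distance $5$ combined with this local structure forces the $n/(r+1)$ local groups to behave like the points/blocks of a combinatorial design or a cap-like configuration in a projective space of bounded dimension over $\mathbb{F}_q$. Concretely, I would look at the parity-check matrix $H$ of $C$ in the form dictated by optimality: a block of local parity checks (one row per group) plus two further rows $h_1, h_2$ of ``global'' checks. Distance $5$ means every $4$ columns of $H$ are linearly independent. Restricting attention to the local structure, each local group contributes a specific low-dimensional column space, and the cross-group interactions through $h_1,h_2$ must avoid certain coincidences. The standard trick (as in Barg et al. and in \cite{GXY18}) is to associate to each local group a point, or a small set of points, in $\mathrm{PG}(t,q)$ for a constant $t$, and to argue that distance $5$ forbids three of these points from being collinear (or an analogous ``no three in a line / Sidon-type'' condition). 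A Sidon set or cap in $\mathrm{PG}(t,q)$ has size $O(q^{\lceil t/2 \rceil})$ or $O(q^{t-1})$; getting the right constant-dimensional ambient space yields $n/(r+1)=O(q^2)$, hence $n=(r+1)\cdot O(q^2)$, and since $r$ itself is bounded in terms of the fixed parameters, $n=O(q^2)$.

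The main obstacle, I expect, is pinning down the precise dimension of the ambient projective space and the exact forbidden-configuration condition. The bookkeeping is delicate: one must carefully account for how many redundancy symbols are ``spent'' locally versus globally using \eqref{optimal_alt} with $d=5$, verify that the local codes really are MDS (not merely that they have the right dimension), and then translate the minimum-distance-$5$ condition into a clean statement about the images of the local groups. In particular, handling the boundary cases of \eqref{optimal_alt} — whether $d-2=3 \equiv r \bmod r+1$, i.e. $r=1$ or $r=3$, versus the generic case — requires splitting into subcases, which is presumably why the hypotheses are stated with the floor function $\lfloor 3/(r+1)\rfloor$ appearing explicitly. Once the configuration is identified as living in a space of dimension independent of $q$ and $r$, the final extremal bound ($O(q^2)$ points) is a citation to known results on Sidon sets / arcs / caps, and the conclusion follows by multiplying back by $r+1$ and bounding $r$.

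Alternatively — and this may be the route the authors actually take, since they phrase everything for the specific value $d=5$ — one can argue more directly by contradiction: assume $n$ is much larger than $Cq^2$, use the pigeonhole principle on the $O(q^2)$ possible ``types'' of a local group (its position data relative to the global checks), find two local groups of the same type, and exhibit a codeword of weight $\le 4$ supported on these two groups plus a bounded number of extra coordinates, contradicting $d=5$. This keeps the proof self-contained and avoids invoking design-theoretic machinery; hypothesis \eqref{long_enough} then enters to ensure there are enough groups to find a repeated type and enough slack to complete the low-weight codeword within the two groups. I would carry out this contradiction version: (1) set up $H$ in optimal LRC normal form with $d=5$, (2) define the type map from local groups to an $O(q^2)$-size set, (3) pigeonhole to get a collision, (4) solve a small linear system to produce the weight-$\le 4$ codeword, (5) conclude $n=O(q^2)$.
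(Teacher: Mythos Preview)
The paper does not prove this theorem at all: it is quoted from \cite{GXY18} in the Preliminaries section as background, with no argument given. There is therefore nothing in the paper to compare your proposal against.

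For what it is worth, your sketch is broadly in the spirit of the actual argument in \cite{GXY18}: one uses the optimal-LRC structure to write the parity-check matrix in a normal form with one local check per group and a bounded number of global checks (for $d=5$ this is $3-\lfloor 3/(r+1)\rfloor$ global rows, which is where the floor in \eqref{long_enough} comes from), and then bounds the number of groups by a packing/Sidon-type argument on the global portion of the columns. But your write-up is still only an outline, not a proof: you never actually specify the ``type map'' from local groups into a set of size $O(q^2)$, nor verify that a collision of types yields a codeword of weight at most $4$. You also have the role of \eqref{long_enough} slightly off---it is not there to make a pigeonhole argument non-vacuous, but to ensure the code is long enough that the optimal-LRC structure theorem (local MDS groups plus the right number of global checks) applies in the first place. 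If you want to complete this into a real proof, you would need to consult \cite{GXY18} for those details; the present paper simply invokes the result.
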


\begin{theorem}[\cite{GXY18}]
\label{lower}
Assume $r\geq 3$ and $(r + 1)\mid n$. Then there exist optimal LRCs of length $\Omega(q^{2})$. In particular, one obtains the best possible length for
optimal LRCs of minimum distance 5.
\end{theorem}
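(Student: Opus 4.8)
The plan is to realize the claimed codes as Cartesian codes on $\mathbb{F}_q^*\times\mathbb{F}_q^*$, so that the length is a quadratic function of $q$, and then appeal to Theorem~\ref{upper}: since every optimal LRC of distance $5$ with $(r+1)\mid n$ satisfying \eqref{long_enough} has $n=O(q^2)$, a family achieving $n=\Theta(q^2)$ is automatically of best possible asymptotic length, so it suffices to construct, for infinitely many prime powers $q$, an optimal $[n,k,5]_q$ LRC of locality $r$ with $n=\Theta(q^2)$. I would restrict to the (infinitely many) prime powers $q$ with $(r+1)\mid q-1$, let $H\le\mathbb{F}_q^*$ be the subgroup of order $r+1$, and use $g(y)=y^{r+1}$ as a good polynomial: it is constant on each of the $m:=(q-1)/(r+1)$ cosets of $H$. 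The coordinate set is (a subset of) $P:=\mathbb{F}_q^*\times\mathbb{F}_q^*$, with the recovery set of the coordinate indexed by $(x_0,y_0)$ taken to be the coset-fiber $\{x_0\}\times(y_0H)$ of size $r+1$; these fibers partition $P$ into $n/(r+1)$ blocks, which is the source of the $n/(r+1)$ ``local'' parity checks an optimal LRC with $(r+1)\mid n$ must have.

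The message space $V$ is built from bivariate polynomials whose restriction to every coset-fiber is, as a polynomial in $y$, of degree at most $r-1$ --- equivalently, $V$ lies inside the span of the monomials $x^a g(y)^b y^t$ with $0\le t\le r-1$ together with the ambient degree bounds --- and then cut down by exactly $d-2=3$ further linear conditions on the coefficients. The degree-$\le r-1$ restriction gives locality $r$ immediately: since $g(y)$ is constant on $y_0H$, any $f\in V$ restricts there to a degree-$\le r-1$ univariate polynomial in $y$ evaluated at the $r+1$ points of $y_0H$, so its value at $y_0$ is determined by its values on $y_0H\setminus\{y_0\}$, which is precisely the disjointness property required of a recovery set. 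I would then read off $\dim_{\mathbb{F}_q}V=k=\frac{rn}{r+1}-3$ (for $r\ge4$; the case $r=3$ needs the general Singleton-like bound rather than \eqref{optimal_alt}, since there $d-2\equiv r\pmod{r+1}$) and check that $(n,k,5)$ with locality $r$ saturates the bound --- for $r\ge4$ this is exactly the identity \eqref{optimal_alt}, using $d-2=3\not\equiv r\pmod{r+1}$.

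The heart of the argument is that $d=5$. For $d\le5$ I would exhibit an explicit weight-$5$ codeword, a product of linear factors in $x$, factors $g(y)-\gamma$, and linear factors in $y$, whose leading monomial lies inside $V$. For $d\ge5$ I would split on the number $N$ of coset-fibers on which a nonzero $f\in V$ fails to vanish identically: on such a fiber $f$ has at most $r-1$ zeros, so $\mathrm{wt}(f)\ge 2N$, and hence $N\ge3$ already forces $\mathrm{wt}(f)\ge6$. The cases $N\in\{1,2\}$ are controlled by combining (i) the footprint (Gröbner-basis) bound for evaluation codes on Cartesian sets, in the form used to compute minimum distances of affine Cartesian codes in \cite{LRV14,CN17} --- the number of common zeros of $f$ with $x^{q-1}-1$ and $y^{q-1}-1$ is at most the size of the corresponding footprint, which already yields $\mathrm{wt}(f)\ge5$ unless the leading monomial of $f$ lies in a short explicit list near the top corner --- with (ii) the three extra conditions, designed exactly to exclude codewords whose leading monomial is on that list. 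This last point is what I expect to be the main obstacle: the three conditions must be chosen with care, since the obvious choices (for instance, simply deleting three monomials of $V$) either keep the dimension at $\frac{rn}{r+1}-3$ but leave a weight-$4$ codeword supported on two fibers lying in a common coset-row, or else destroy such codewords only at the cost of a fourth condition, hence of optimality (as $n-k=\frac{n}{r+1}+4$ is the parity count of an optimal distance-$6$ code). Pinning down a choice of coordinate set and of three linear conditions that simultaneously yields $\dim V=\frac{rn}{r+1}-3$ and minimum distance exactly $5$ --- neither $4$ nor $6$ --- and verifying the latter via the footprint/fiber-count analysis above, is the delicate combinatorial step.
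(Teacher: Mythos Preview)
Your overall framework---evaluate bivariate polynomials on $\mathbb{F}_q^*\times\mathbb{F}_q^*$, take recovery sets to be cosets of the order-$(r+1)$ subgroup in one coordinate, keep only monomials whose exponent in that coordinate avoids the residue $r\pmod{r+1}$, and then impose three further linear conditions---is exactly the paper's construction (Theorem~\ref{cart_const_thm}), up to swapping the roles of $x$ and $y$. The dimension count $k=n-\tfrac{n}{r+1}-3$ and the verification of locality are identical. Where you diverge is in the proof that $d\ge5$, and this is precisely where the ``delicate combinatorial step'' you flag lives.

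The paper does not use footprint bounds or a split on the number $N$ of nonvanishing fibers. Instead it expresses every $f\in\mathcal{F}$ via Lagrange interpolation, $f=\sum_i f(P_i)\,g_{P_i}$ (Lemma~\ref{2018.06.28}), where the indicator $g_P$ for $P=(\alpha,\beta)$ has coefficient $\alpha^{-i}\beta^{-j}$ at $x^iy^j$. If $f$ had weight $\le4$, nonzero at $P_1,\dots,P_4$, then the vanishing of the coefficients of the seven monomials $1,\ x^{q-2}y^{q-2},\ x^{q-3}y^{q-2},\ x^{q-4}y^{q-2},\ x^{q-2}y^{q-3},\ x^{q-2}y^{q-4},\ x^{q-2}y^{q-5}$ would force a nontrivial dependence among the columns of the $7\times4$ matrix with rows $1,\ \alpha_\ell\beta_\ell,\ \alpha_\ell^2\beta_\ell,\ \alpha_\ell^3\beta_\ell,\ \alpha_\ell\beta_\ell^2,\ \alpha_\ell\beta_\ell^3,\ \alpha_\ell\beta_\ell^4$. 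Four of those seven monomials (those with $x$-exponent $q-2$) are already absent by the locality condition; the other three---$1$, $x^{q-3}y^{q-2}$, $x^{q-4}y^{q-2}$---are exactly the three monomials the paper deletes. Lemma~\ref{2018.06.29} is then a direct case analysis on the pattern of equalities among the $\beta_\ell$ showing this matrix always has independent columns, which is the key lemma.

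So the obstacle you identify is real for a footprint-based argument, but the paper sidesteps it entirely: the three deleted monomials are chosen not by analyzing leading monomials near the corner, but so that together with four locality-forbidden monomials they produce a $7\times4$ matrix with Vandermonde-like submatrices. Your plan to exhibit an explicit weight-$5$ codeword is also unnecessary---the paper simply invokes the Singleton-like bound for $d\le5$ once $k$ is known. Finally, your caveat about $r=3$ is consistent with the paper: Theorem~\ref{cart_const_thm} claims optimality only for $r>3$.
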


Note that while these results stipulate that $(r+1)\mid n$ for ease of argument, the authors also explain how to extend to the case in which $(r+1)\nmid n$. 

\section{Cartesian Code Construction}
\label{Cartesian}

In this section we describe an optimal distance $5$ LRC over $\mathbb{F}_q$ of length $(q-1)^2$ using
Cartesian codes.
\begin{defn}
Define $V:=\mathbb{F}^*_q \times \mathbb{F}^*_q$ and $n:=\left| V \right|=(q-1)^2.$ 
Fix an ordering $P_1,\ldots, P_{n}$ on the points  of $V.$
Let $\mathcal{F}$ be the $\mathbb{F}_q$-subspace of $\mathbb{F}_q[x,y]$
spanned by the monomials $\left\{x^iy^j \mid 0\leq i,j\leq q-2 \right\}.$
The \emph{Cartesian code} on two components is defined by
\[C(V,\mathcal{F}):=
\left\{ \left(f\left(P_1\right),\ldots,f\left(P_{n}\right) \right)
\mid f\in \mathcal{F} \right\}.\]
\end{defn}
Observe that the vanishing
ideal of $V$ is given by
$I(V)=\left(x^{q-1}-1,y^{q-1}-1\right)$.
Thus the only element of $\mathcal{F}$
that vanishes on $V$ is the zero
element.

For each point $P=(\alpha, \beta)$ of $V$ define the polynomial
\begin{equation}\label{2018.06.30}
g_P(x,y):= \frac{\left(x^{q-1}-1\right)\left(y^{q-1}-1\right)}
{\left(\alpha^{-1}x-1\right)\left(\beta^{-1}y-1\right)}
= \sum\limits_{0 \leq  i,j \leq q-2} \alpha^{-i}\beta^{-j}x^{i}y^{j}.
\end{equation}
It is straightforward to check that $g_P(Q)=1$ if and only if $Q=P$
and $g_P(Q)=0$ if and only if $Q\neq P.$
Furthermore, $g_P(x,y)$ is unique with such a properties, because if there were another polynomial
$f(x,y)$ with same properties, then $f(x,y)-g_P(x,y)$ would vanish on $V,$ but the zero polynomial
is the only polynomial in $\mathcal{F}$ that vanishes on $V.$
\begin{lem}\label{2018.06.28}
Let $f(x,y)$ be an element of $\mathcal{F}.$ Then
\[ f(x,y) = \sum\limits_{P_i \in V} f(P_i)g_{P_i}(x,y), \]
where $f(P_i)$ represents the value of $f(x,y)$ at the point $P_{i}.$
\end{lem}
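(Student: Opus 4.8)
The plan is to recognize this as the two–variable analogue of Lagrange interpolation and to prove it by showing that both sides agree as elements of the restricted space $\mathcal{F}$. Concretely, I would introduce the difference
\[ h(x,y) := f(x,y) - \sum_{P_i \in V} f(P_i)\, g_{P_i}(x,y) \]
and argue that $h$ is the zero polynomial, from which the identity follows immediately.

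First I would check that $h \in \mathcal{F}$. By hypothesis $f \in \mathcal{F}$; by the explicit expansion in \eqref{2018.06.30}, each $g_{P_i}$ is an $\mathbb{F}_q$-linear combination of the monomials $x^a y^b$ with $0 \le a,b \le q-2$, so $g_{P_i}\in\mathcal{F}$; and since $\mathcal{F}$ is an $\mathbb{F}_q$-subspace of $\mathbb{F}_q[x,y]$, it is closed under the finite $\mathbb{F}_q$-linear combination defining $h$. Next I would evaluate $h$ at an arbitrary point $P_j \in V$ and use the Kronecker-delta property $g_{P_i}(P_j)=1$ if $i=j$ and $0$ otherwise (verified just before the lemma) to get
\[ h(P_j) = f(P_j) - \sum_{P_i \in V} f(P_i)\, g_{P_i}(P_j) = f(P_j) - f(P_j) = 0, \]
so $h$ vanishes on all of $V$. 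Finally, since $h \in \mathcal{F}$ and the zero polynomial is the only element of $\mathcal{F}$ that vanishes on $V$ (because $I(V)=(x^{q-1}-1,y^{q-1}-1)$ while the monomials spanning $\mathcal{F}$ have degree at most $q-2$ in each variable), I conclude $h=0$, i.e.\ $f = \sum_{P_i \in V} f(P_i) g_{P_i}$.

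There is essentially no deep obstacle here: all the needed ingredients — that $g_P \in \mathcal{F}$, the delta evaluation property, and the triviality of the vanishing of $\mathcal{F}$ on $V$ — were already established in the excerpt, so the lemma is a short formal consequence. The only point requiring a little care is that the argument must stay inside $\mathcal{F}$ rather than appealing to interpolation over all of $\mathbb{F}_q[x,y]$; it is crucial that we combine $h\in\mathcal{F}$ with the vanishing of $h$ on $V$. Equivalently, one could phrase the proof through the evaluation map $\mathrm{ev}\colon \mathcal{F}\to\mathbb{F}_q^{n}$, $f\mapsto(f(P_1),\ldots,f(P_n))$: it is injective by the vanishing remark and $\dim\mathcal{F}=n=\dim\mathbb{F}_q^{n}$, so it is an isomorphism sending each $g_{P_i}$ to a standard basis vector; hence the $g_{P_i}$ form a basis of $\mathcal{F}$ and reading off coordinates yields the expansion with coefficients $f(P_i)$.
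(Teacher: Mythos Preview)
Your proof is correct and follows essentially the same approach as the paper: form the difference, observe it lies in $\mathcal{F}$ and vanishes on all of $V$, and conclude it is the zero polynomial. The additional remark on the evaluation map is a nice alternative phrasing but not needed.
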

\begin{proof} Define $h(x,y) : = \sum\limits_{P_i \in V} f(P_i)g_{P_i}(x,y)$. Note that for all $P_i \in V$
we have $f(P_i) = h(P_i)$. Therefore $f(x,y) - h(x,y)$ vanishes on $V$. However, the only polynomial
in $\mathcal{F}$ with this property is the zero polynomial, which implies $f(x,y) = h(x,y).$
\end{proof}
{The following result is key to finding LRCs using Cartesian codes.}
\begin{lem}\label{2018.06.29}
If $P_i=(\alpha_i,\beta_i), i\in [4]$, are distinct elements of $V,$ then the matrix
 \[A=\begin{pmatrix}
1&1&1&1\\
\alpha_1\beta_1&\alpha_2\beta_2&\alpha_3\beta_3&\alpha_4\beta_4\\
\alpha_1^2\beta_1&\alpha_2^2\beta_2&\alpha_3^2\beta_3&\alpha_4^2\beta_4\\
\alpha_1^3\beta_1&\alpha_2^3\beta_2&\alpha_3^3\beta_3&\alpha_4^3\beta_4\\
\alpha_1\beta_1^2&\alpha_2\beta_2^2&\alpha_3\beta_3^2&\alpha_4\beta_4^2\\
\alpha_1\beta_1^3&\alpha_2\beta_2^3&\alpha_3\beta_3^3&\alpha_4\beta_4^3\\
\alpha_1\beta_1^4&\alpha_2\beta_2^4&\alpha_3\beta_3^4&\alpha_4\beta_4^4\\
\end{pmatrix}\]
has linearly independent columns.
\end{lem}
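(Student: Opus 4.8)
We must show that $A\mathbf{c}=\mathbf{0}$ forces $\mathbf{c}=\mathbf{0}$ for every $\mathbf{c}\in\mathbb{F}_q^4$. Since each $\alpha_i$ and $\beta_i$ is a unit, the substitution $d_i:=\alpha_i\beta_i\,c_i$ is an invertible change of variables, and it turns rows $2,\dots,7$ of $A\mathbf{c}=\mathbf{0}$ into the pure moment equations
\[\sum_{i=1}^4 d_i\,\alpha_i^{\,a}=0\quad(a=0,1,2),\qquad \sum_{i=1}^4 d_i\,\beta_i^{\,b}=0\quad(b=0,1,2,3),\]
while the first row becomes $\sum_{i=1}^4 d_i(\alpha_i\beta_i)^{-1}=0$. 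It therefore suffices to prove $\mathbf{d}=\mathbf{0}$.

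The first move is to use the four $\beta$-moments: $\sum_i d_i\beta_i^{\,b}=0$ for $b=0,\dots,3$ says that $\mathbf{d}$ annihilates the evaluation at $\beta_1,\dots,\beta_4$ of every polynomial of degree at most $3$, which by Lagrange interpolation is equivalent to $\sum_{i:\,\beta_i=\beta}d_i=0$ for each value $\beta$ occurring among the $\beta_i$. If the $\beta_i$ are pairwise distinct this already gives $\mathbf{d}=\mathbf{0}$. If they take three distinct values, exactly two points, say $P_1$ and $P_2$, share the repeated value, so Lagrange gives $d_3=d_4=0$ and $d_1=-d_2$; since $P_1\ne P_2$ forces $\alpha_1\ne\alpha_2$, the $\alpha$-moment equation $\sum_i d_i\alpha_i=0$ then collapses to $d_1(\alpha_1-\alpha_2)=0$. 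If the $\beta_i$ split as $3+1$, the three points on the common $\beta$-level have pairwise distinct $\alpha$-coordinates and the fourth satisfies $d_4=0$, so the three $\alpha$-moment equations restrict on those three indices to an invertible $3\times3$ Vandermonde system. In all of these cases $\mathbf{d}=\mathbf{0}$.

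This leaves the case in which the $\beta_i$ take two distinct values, each occurring twice; say $\beta_1=\beta_2=u$ and $\beta_3=\beta_4=v$ with $u\ne v$, so that $d_1=-d_2$, $d_3=-d_4$, $\alpha_1\ne\alpha_2$, and $\alpha_3\ne\alpha_4$. If some $\alpha$-coordinate is shared across the two $\beta$-levels, then $\{\alpha_1,\dots,\alpha_4\}$ has at most three distinct values, and Lagrange applied to the three $\alpha$-moments, together with the two relations above (and, in the grid pattern $\{\alpha_1,\alpha_2\}=\{\alpha_3,\alpha_4\}$, the first-row equation, which there pins $\mathbf{d}$ to a one-parameter family and then reads $d_1(\alpha_1^{-1}-\alpha_2^{-1})(u^{-1}-v^{-1})=0$), forces $\mathbf{d}=\mathbf{0}$. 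The one delicate subcase is when $\alpha_1,\dots,\alpha_4$ are also pairwise distinct: then the three $\alpha$-moments determine $\mathbf{d}$ up to a scalar, $d_i=\lambda\prod_{j\ne i}(\alpha_i-\alpha_j)^{-1}$; one checks that the $\beta$-relations force $\lambda=0$ unless $\alpha_1+\alpha_2=\alpha_3+\alpha_4$, and that, when this equality holds, the first-row equation reduces, up to a nonzero factor, to $\lambda\,(u\,\alpha_1\alpha_2-v\,\alpha_3\alpha_4)=0$.

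So the crux, and the step I expect to be the main obstacle, is to rule out four distinct points of $V$ simultaneously satisfying $\alpha_1+\alpha_2=\alpha_3+\alpha_4$ and $u\,\alpha_1\alpha_2=v\,\alpha_3\alpha_4$. I would approach this by writing $s$ for the common sum and $p=\alpha_1\alpha_2$, $p'=\alpha_3\alpha_4$ for the two products (here $p\ne p'$ automatically, since equality of both $s$ and the products would make $\alpha_1,\alpha_2$ and $\alpha_3,\alpha_4$ the same pair of roots of $t^2-st+p$, contradicting distinctness of the $\alpha_i$), and then trying to force a contradiction out of $up=vp'$ with $u\ne v$. Every configuration outside this corner case reduces transparently to Vandermonde and Lagrange manipulations, so the whole weight of the lemma rests on settling this point; should the four global relations encoded by $A$ not suffice here, the natural fallback is to bring in the local parity checks of the code alongside $A$ rather than arguing from $A$ in isolation.
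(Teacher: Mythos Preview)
Your reduction via $d_i=\alpha_i\beta_i c_i$ is clean and your case analysis is essentially correct; you did skip the case in which all four $\beta_i$ coincide, but that one is easy (the $\alpha_i$ are then pairwise distinct and rows $1$--$4$ of $A$ already form a scaled $4\times4$ Vandermonde block). More importantly, you have put your finger on the real obstruction, and that obstruction is \emph{genuine}: the lemma as stated is false.

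Over $\mathbb{F}_7$ take $P_1=(1,1)$, $P_2=(6,1)$, $P_3=(3,4)$, $P_4=(4,4)$. Then $\alpha_1+\alpha_2=0=\alpha_3+\alpha_4$, while $\alpha_1\alpha_2=6$, $\alpha_3\alpha_4=5$, and $u\,\alpha_1\alpha_2=1\cdot 6=6=4\cdot 5=v\,\alpha_3\alpha_4$, so both of your ``crux'' equations hold. One checks directly that $A\mathbf{c}=\mathbf{0}$ for $\mathbf{c}=(1,1,6,6)^T$; in fact rows $1$ and $3$ of $A$ coincide here, as do rows $2,7$ and rows $4,6$, and this $\mathbf{c}$ annihilates the four surviving rows. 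So no amount of further manipulation of $A$ alone can close the argument, and your proposed fallback of bringing in extra parity checks is the only way forward.

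The paper's own proof has a gap at exactly the same spot. In the $\beta_1=\beta_2\neq\beta_3=\beta_4$ case, after row and column operations it asserts that if columns $1,2,4$ span column $3$ then column $3$ must be a \emph{pure} multiple of column $4$. But row $1$ of the reduced matrix only forces the coefficient on column $1$ to vanish, not that on column $2$; allowing a column-$2$ contribution, one finds that column $3$ lies in the span of columns $1,2,4$ precisely when $\alpha_1+\alpha_2=\alpha_3+\alpha_4$ and $\beta_1\alpha_1\alpha_2=\beta_3\alpha_3\alpha_4$ --- the very configuration you isolated. Your instinct that the seven rows of $A$ might not suffice was correct.
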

\begin{proof}
If $\beta_1=\beta_2=\beta_3=\beta_4,$ then the $\alpha_i$'s are distinct. 
Thus the first four rows are multiples of the rows of a Vandermonde matrix with $4$ distinct elements,
which means the columns of $A$ are linearly independent. If all the $\beta_i$'s are pairwise distinct,
divide each column $i$ by the element $\alpha_i\beta_i.$
Thus rows $2,5,6$ and $7$ form a Vandermonde matrix with $4$ distinct elements,
so again the columns of $A$ are linearly independent. 

If $\beta_1=\beta_2=\beta_3\neq \beta_4,$
then $\alpha_1,\alpha_2$ and $\alpha_3$ are distinct. Dividing each row by the power
of $\beta_1$ that appears in its first entry, and using row $2$ to simplify rows $5, 6 $ and $7$, $A$ becomes the matrix

\[\begin{pmatrix}
1&1&1&1\\
\alpha_1 &\alpha_2 &\alpha_3 &\alpha_4\gamma\\
\alpha_1^2 &\alpha_2^2 &\alpha_3^2 &\alpha_4^2\gamma\\
\alpha_1^3 &\alpha_2^3 &\alpha_3^3 &\alpha_4^3\gamma\\
0 & 0 & 0 &\alpha_4(\gamma^2-\gamma)\\
0 & 0 & 0 &\alpha_4(\gamma^3-\gamma)\\
0 & 0 & 0 &\alpha_4(\gamma^4-\gamma)\\
\end{pmatrix},\]

\noindent where  $\gamma:=\beta_4\beta_1^{-1}\neq 1.$ Then row $4$ is non-zero and all the columns
are linearly independent, because columns $1,2$ and $3$ contain a Vandermonde matrix with
$3$ distinct elements, and the first three columns cannot span the fourth. 

If $\beta_1=\beta_2$ and $\beta_2\neq\beta_3\neq \beta_4,$
then $\alpha_1$ and $\alpha_2$ are different. For $i\in [4]$, divide column $i$ of $A$ by $\alpha_i$ and subtract the
second column from the first. Matrix $A$ becomes the following:
 \[\begin{pmatrix}
 \alpha_1^{-1}-\alpha_2^{-1}&\alpha_2^{-1}&\alpha_3^{-1}&\alpha_4^{-1}\\
0&\beta_2&\beta_3&\beta_4\\
(\alpha_1-\alpha_2)\beta_1&\alpha_2\beta_2&\alpha_3\beta_3&\alpha_4\beta_4\\
(\alpha_1^2-\alpha_2^2)\beta_1&\alpha_2^2\beta_2&\alpha_3^2\beta_3&\alpha_4^2\beta_4\\
0&\beta_2^2&\beta_3^2&\beta_4^2\\
0&\beta_2^3&\beta_3^3&\beta_4^3\\
0&\beta_2^4&\beta_3^4&\beta_4^4\\
\end{pmatrix}.\]
As $\alpha_1\neq \alpha_2,$ position $(3,1)$ is non-zero. Thus, the columns of $A$ are linearly independent
because columns $2, 3$ and $4$ contain a Vandermonde matrix with $3$ distinct elements, and these three columns cannot span the first.

Finally, if $\beta_1 = \beta_2 \neq \beta_3=\beta_4,$ then $\alpha_1\neq \alpha_2$ and
$\alpha_3\neq \alpha_4.$
Subtract the first column from the second one, the third column from the fourth, and
then the first column from the third.
Divide each row by the power of $\beta_1$ that appears in its first entry;
divide the second column by $\alpha_2-\alpha_1$ and the fourth column by
$(\alpha_4-\alpha_3)\gamma,$ where $\gamma = \beta_3 \beta_1^{-1}.$ Then $A$ becomes:

 \[\begin{pmatrix}
1&0&0&0\\
\alpha_1 &1 &\alpha_3\gamma -\alpha_1& 1 \\
\alpha_1^2 &(\alpha_2+\alpha_1) &\alpha_3^2\gamma-\alpha_1^2& (\alpha_4+\alpha_3)\\
\alpha_1^3 &(\alpha_2^2+\alpha_2\alpha_1+\alpha_1^2) &\alpha_3^3\gamma-\alpha_1^3&(\alpha_4^2+\alpha_4\alpha_3+\alpha_3^2)\\
\alpha_1&1 &\alpha_3\gamma^2-\alpha_1& \gamma\\
\alpha_1& 1&\alpha_3\gamma^3-\alpha_1& \gamma^2\\
\alpha_1&  1&\alpha_3\gamma^4-\alpha_1&\gamma^3\\
\end{pmatrix}.\]

Subtracting row $2$ from rows $5,6$ and $7$ we obtain:

  \[A=\begin{pmatrix}
1&0&0&0\\
\alpha_1 &1 &\alpha_3\gamma -\alpha_1& 1 \\
\alpha_1^2 &(\alpha_2+\alpha_1) &\alpha_3^2\gamma-\alpha_1^2& (\alpha_4+\alpha_3)\\
\alpha_1^3 &(\alpha_2^2+\alpha_2\alpha_1+\alpha_1^2) &\alpha_3^3\gamma-\alpha_1^3&(\alpha_4^2+\alpha_4\alpha_3+\alpha_3^2)\\
0& 0&\alpha_3(\gamma^2-\gamma)& \gamma-1\\
0& 0&\alpha_3(\gamma^3-\gamma)& \gamma^2-1\\
0& 0&\alpha_3(\gamma^4-\gamma)&\gamma^3-1\\
\end{pmatrix}.\]

It is clear that columns 1 and 2 are independent. As $\gamma\neq 1,$ position $(5,4)$ is non-zero, thus columns $1,2$ and $4$ are linearly
independent. If columns 1, 2, and 4 span column 3, then column $3$ must be a multiple of column $4.$ Since $\gamma\neq 1$, rows 5, 6, and 7 imply this multiple must be $\alpha_{3}\gamma$. However, this would imply that $\alpha_{1}=0$ in the second row, a contradiction. Thus, all columns are independent.
\end{proof}

\begin{theorem}
\label{cart_const_thm}
Assume $(r+1)\mid (q-1)$.
Let $\mathcal{L}$ be the $\mathbb{F}_q$-subspace of $\mathbb{F}_q[x,y]$
spanned by the monomials

\[
\left\{ x^iy^j \ | \ 0 \leq i,j \leq q-2, \ i \not\equiv r \mod r+1  \right\} \setminus 
\left\{ 1, x^{q-3}y^{q-2}, x^{q-4}y^{q-2} \right\}.
\]

The code $C(V,\mathcal{L})$ is an LRC with locality $r$ and minimum distance $\geq 5.$
If $r> 3$, $C(V,\mathcal{L})$ is an optimal LRC with locality $r$ and
minimum distance $5.$
 \end{theorem}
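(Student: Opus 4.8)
The plan is to verify, in order: the dimension of $C(V,\mathcal{L})$, that it has locality $r$, and that its minimum distance is at least $5$; optimality with $d=5$ for $r>3$ will then follow by a direct comparison with the Singleton‑type bound. For the dimension: since the only element of $\mathcal{F}\supseteq\mathcal{L}$ vanishing on all of $V$ is $0$, the evaluation map is injective on $\mathcal{L}$, so $\dim C(V,\mathcal{L})=\dim\mathcal{L}=:k$. Because $(r+1)\mid(q-1)$, exactly $\tfrac{q-1}{r+1}$ of the exponents $i\in\{0,\dots,q-2\}$ satisfy $i\equiv r\pmod{r+1}$, contributing $\tfrac{(q-1)^2}{r+1}=\tfrac{n}{r+1}$ excluded monomials, and a short check (using $0\not\equiv r$ and $q-3\equiv r-1$, $q-4\equiv r-2\pmod{r+1}$, all valid for $r\ge2$) shows the three further monomials $1,x^{q-3}y^{q-2},x^{q-4}y^{q-2}$ are distinct and not of the first type; hence $n-k=\tfrac{n}{r+1}+3$. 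For locality, let $H\le\mathbb{F}_q^{*}$ be the subgroup of order $r+1$ and, for $P_i=(\alpha,\beta)$, take $R=(\alpha H)\times\{\beta\}$, of size $r+1$. For $f\in\mathcal{L}$ the restriction $f(x,\beta)$ uses only monomials $x^i$ with $i\not\equiv r\pmod{r+1}$; on the coset $\alpha H$ one has $x^{r+1}=\alpha^{r+1}$, so $f(x,\beta)\bmod(x^{r+1}-\alpha^{r+1})$ is a polynomial $\tilde f(x)$ of degree $\le r$ whose $x^r$‑coefficient collects only the forbidden exponents $i\equiv r$ and therefore vanishes, giving $\deg\tilde f\le r-1$. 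Such a polynomial is determined by its values at any $r$ of the $r+1$ points of $\alpha H$, so $f(P_i)=\tilde f(\alpha)$ is a function of $\{f(P):P\in R\setminus\{P_i\}\}$; hence the projections $C_{I_i}(i,\mu)$ are pairwise disjoint, $R$ is a recovery set, and $C(V,\mathcal{L})$ has locality $r$.

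For the minimum distance I would use Lemma~\ref{2018.06.28} and \eqref{2018.06.30} to describe the parity‑check structure: a word $c\in\mathbb{F}_q^{n}$ is the evaluation of the unique $f=\sum_\ell c_\ell g_{P_\ell}\in\mathcal{F}$, and $f\in\mathcal{L}$ iff the coefficient $\sum_\ell c_\ell\alpha_\ell^{-i}\beta_\ell^{-j}$ of every excluded monomial $x^iy^j$ vanishes. Thus $C(V,\mathcal{L})=\ker H$, where $H$ has rows $(\alpha_1^{-i}\beta_1^{-j},\dots,\alpha_n^{-i}\beta_n^{-j})$ indexed by the excluded monomials, and $d\ge5$ is equivalent to every $4$ columns of $H$ being linearly independent. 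Fix distinct $P_1,\dots,P_4\in V$ with $P_\ell=(\alpha_\ell,\beta_\ell)$, and select the seven excluded monomials $1,\ x^{q-2}y^{q-2},\ x^{q-3}y^{q-2},\ x^{q-4}y^{q-2},\ x^{q-2}y^{q-3},\ x^{q-2}y^{q-4},\ x^{q-2}y^{q-5}$; the first, third and fourth are the three special excluded monomials and the other four have $i=q-2\equiv r\pmod{r+1}$ (this uses $q\ge5$, which holds since $r>3$ forces $q\ge6$). Because $\alpha_\ell^{q-1}=\beta_\ell^{q-1}=1$, the entries $\alpha_\ell^{-i}\beta_\ell^{-j}$ of $H$ in these seven rows are respectively $1,\ \alpha_\ell\beta_\ell,\ \alpha_\ell^{2}\beta_\ell,\ \alpha_\ell^{3}\beta_\ell,\ \alpha_\ell\beta_\ell^{2},\ \alpha_\ell\beta_\ell^{3},\ \alpha_\ell\beta_\ell^{4}$, that is, the four columns of $H$ restricted to these rows form exactly the matrix $A$ of Lemma~\ref{2018.06.29}. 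By that lemma those columns are linearly independent, hence the four columns of $H$ are independent; as $P_1,\dots,P_4$ were arbitrary, $d(C(V,\mathcal{L}))\ge5$.

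For optimality with $r>3$: then $d-2=3\not\equiv r\pmod{r+1}$ and $\lfloor 3/(r+1)\rfloor=0$, so condition \eqref{optimal_alt} reads $n-k=\tfrac{n}{r+1}+3$, which we have; equivalently, $\lceil k/r\rceil=\lceil \tfrac{n}{r+1}-\tfrac{3}{r}\rceil=\tfrac{n}{r+1}$ (as $\tfrac{n}{r+1}\in\mathbb{Z}$ and $0<\tfrac{3}{r}<1$), so the Singleton‑type bound gives $n-k-\lceil k/r\rceil+2=5$. Hence $d\le5$; combined with $d\ge5$ this yields $d=5$ with equality in the bound, so $C(V,\mathcal{L})$ is an optimal LRC of locality $r$ and minimum distance $5$. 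I expect the only genuine obstacle to be the exponent bookkeeping in the distance step — isolating the correct seven excluded monomials and checking, via $\alpha^{q-1}=\beta^{q-1}=1$, that the associated $7\times4$ submatrix of the parity‑check matrix is literally the matrix $A$; the substantive combinatorics behind the distance bound, namely the case analysis on how many of the $\beta_\ell$ coincide, is already packaged in Lemma~\ref{2018.06.29}, and the dimension count, the coset–interpolation argument for locality, and the final ceiling arithmetic are routine.
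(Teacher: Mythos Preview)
Your proposal is correct and follows essentially the same route as the paper: the same coset argument for locality, the same seven excluded monomials reduced via $\alpha^{q-1}=\beta^{q-1}=1$ to the matrix $A$ of Lemma~\ref{2018.06.29}, and the same dimension count for optimality. The only cosmetic difference is that you phrase the distance step as ``any four columns of the parity-check matrix are independent'' while the paper argues the contrapositive (a putative weight-$\le4$ codeword yields a dependency among the columns of $A$); you are also slightly more explicit about the side conditions ($q\ge5$, the three extra monomials being distinct from the $i\equiv r$ family), which the paper leaves implicit.
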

\begin{proof}
Consider the sets of the form $\alpha R \times \left\{\gamma\right\}
:= \left\{\left(\alpha \beta, \gamma\right) \mid \beta \in 
R\right\}\subset V$, where
$R := \left\{ \beta \in \mathbb{F}_q^* \mid \beta^{r+1} =1\right\}$. These partition $V$ into $\frac{n}{r+1}$ disjoint sets, each of size $r+1$; we claim these are recovery sets of the code.
Evaluating all monomials of $\mathcal{L}$ at $\alpha R \times \{ \gamma \}$ reduces to polynomials
in $\mathbb{F}_q[x]$ of degree less than $r.$ Therefore there is a single parity check equation for the $r+1$ points in each set $\alpha R \times \{\gamma\}$, and so these form recovery sets.

Now we prove that the minimum distance of $C(V,\mathcal{L})$ is at least $5.$
Let $f(x,y)$ be an element of $\mathcal{L}.$ If $f(x,y)$ has only $4$ non-zero evaluations on $V,$ by Lemma~\ref{2018.06.28} there are $P_\ell=(\alpha_\ell,\beta_\ell)\in V$
and $a_\ell\in \mathbb{F}_q^*,$ for $\ell\in [4]$ such that
\[f(x,y) = a_1g_{P_1}(x,y) + a_2g_{P_2}(x,y)+a_3g_{P_3}(x,y)+a_4g_{P_4}(x,y).\]
By \eqref{2018.06.30}, the coefficient of the monomial $x^iy^j$ in $g_{P_i}(x,y)$ is
given by $\alpha_\ell^{-i}\beta_\ell^{-j}.$ As the monomial $1$ does not belong to the support 
of $f(x,y),$ then $a_1+a_2+a_3+a_4=0.$ As $x^{q-2}y^{q-2}$ does not belong to the support
of $f(x,y),$ and $\alpha_\ell^{-q+2}\beta_\ell^{-q+2}=\alpha_\ell\beta_\ell,$ then
$a_1\alpha_1\beta_1+a_2\alpha_2\beta_2+a_3\alpha_3\beta_3+a_4\alpha_4\beta_4=0.$
In a similar way, as the monomials $x^{q-3}y^{q-2}$, $x^{q-4}y^{q-2}$,
$x^{q-2}y^{q-3}$, $x^{q-2}y^{q-4}$, $x^{q-2}y^{q-5}$
do not belong to the support of $f(x,y)$, and its coefficients are of the form
$\alpha_\ell^2\beta_\ell,$ $\alpha_\ell^3\beta_\ell,$ $\alpha_\ell\beta_\ell^2,$ $\alpha_\ell\beta_\ell^3,$
$\alpha_\ell\beta_\ell^4,$ then the
following matrix has linearly dependent columns

 \[A=\begin{pmatrix}
1&1&1&1\\
\alpha_1\beta_1&\alpha_2\beta_2&\alpha_3\beta_3&\alpha_4\beta_4\\
\alpha_1^2\beta_1&\alpha_2^2\beta_2&\alpha_3^2\beta_3&\alpha_4^2\beta_4\\
\alpha_1^3\beta_1&\alpha_2^3\beta_2&\alpha_3^3\beta_3&\alpha_4^3\beta_4\\
\alpha_1\beta_1^2&\alpha_2\beta_2^2&\alpha_3\beta_3^2&\alpha_4\beta_4^2\\
\alpha_1\beta_1^3&\alpha_2\beta_2^3&\alpha_3\beta_3^3&\alpha_4\beta_4^3\\
\alpha_1\beta_1^4&\alpha_2\beta_2^4&\alpha_3\beta_3^4&\alpha_4\beta_4^4\\
\end{pmatrix}.\]

This is not possible, by Lemma~\ref{2018.06.29}. Using same matrix we conclude $f(x,y)$ 
cannot have only $3$ non-zero elements in $V$, because in such a case, matrix $A$ would have $3$ linearly dependent columns. In addition,
$f(x,y)$ cannot have only $2$ non-zero elements in $V$, because in such a case, matrix $A$ would have $2$ linearly dependent columns. 
Finally $f(x,y)$ cannot have only one
non-zero element in $V$ because by \eqref{2018.06.30}, its support should contain all the monomials. Thus, the minimum distance of the code is at least 5.

Recall that $\dim C(V, \mathcal{L}) = \# \text{Mon}(\mathcal{L}) = n - \dfrac{n}{r+1}-3.$
Thus, when $r > 3$, by the Singleton-like bound the minimum distance of $C(V,\mathcal{L})$ is at most $5$. Since we have shown that the minimum distance is at least $5$, 
$C(V,\mathcal{L})$ is an optimal LRC when $r > 3$.
\end{proof}

\begin{rmk}
Requiring $\lceil \nicefrac{(q-1)}{(r+1)} \rceil>3$ will be sufficient to guarantee that the LRC of Theorem \ref{cart_const_thm} is of optimal length, while if $\lceil \nicefrac{(q-1)}{(r+1)} \rceil\leq 3$, $r$ and $q$ must be large in order to satisfy the inequality in Equation \ref{long_enough}.
\end{rmk}

\section{Conclusions}
\label{conclusion}

In this paper, we used Cartesian codes to construct a family of optimal-length, optimal LRCs for the case in which the minimum distance is equal to 5. Ongoing work includes extending our arguments to higher minimum distance and exhibiting other algebraic constructions of optimal-length LRCs.


\bibliographystyle{IEEEtran}
\bibliography{LRC_bib}

\end{document}